\documentclass[11pt]{amsart}

\usepackage{amssymb,latexsym}
\usepackage{amsmath}
\usepackage{amsfonts}
\usepackage{amsopn}
\usepackage{amsthm}
\usepackage{epsfig}
\usepackage{lscape}
\usepackage{multirow}
\usepackage{url}
\usepackage[pdftex,bookmarks=true]{hyperref}
\usepackage{pgf,tikz}
\usetikzlibrary{arrows}
\textwidth172mm \textheight230mm \oddsidemargin-1mm
\evensidemargin\oddsidemargin \topmargin-15mm

\newcounter{NN}
\setcounter{NN}{0}

\newtheorem{remark}[NN]{Remark}

\newtheorem{theorem}[NN]{Theorem}

\newtheorem{lemma}[NN]{Lemma}
\newtheorem{definition}[NN]{Definition}

\def\Z{\mathbb{Z}}
\def\R{\mathbb{R}}
\def\Z{\mathbb{Z}}

\newcommand{\superscript}[1]{\ensuremath{^{\textrm{#1}}}}

\bibliographystyle{plain}
\begin{document}
\title{Poisson brackets of mappings obtained as $(q,-p)$ reductions of lattice equations}
\author{D.T. Tran\superscript{1}, P.H. van der Kamp\superscript{2}, G.R.W. Quispel\superscript{2}}
\address{\superscript{1}School of Mathematics and Statistics,
University of New South Wales, Sydney NSW 2052, Australia}
\address{\superscript{2}Department of Mathematics and Statistics,
La Trobe University, Bundoora VIC 3086, Australia}
\begin{abstract}
In this paper, we present Poisson brackets of  certain classes of mappings obtained as general periodic reductions of integrable lattice equations. The Poisson brackets are derived using the so called Ostrogradsky transformation.
 \end{abstract}
 \maketitle

 \section{Introduction}
In this paper we derive Poisson structures (and sympletic structures)  for several families of discrete dynamical systems. In general, of course, the problem of finding a Poisson structure for a given mapping (if it has such a structure) is not algorithmic. However, suppose the system in question is the periodic reduction of a partial difference equation, e.g.
\begin{equation} \label{E:R1}
u_{l+1,m+1}=F(u_{l,m}, u_{l+1,m}, u_{l,m+1})
\end{equation}
with
\begin{equation} \label{E:R2}
u_{l+q,m-p}=u_{l,m},
\end{equation}
where $F$ is a given function, $\R^3\to\R$, and $u$ is the unknown function, $\Z^2\to\R$. In the simplest cases, $p=q$ resp. $p=1$, Poisson structures were first found by inspection, cf. \cite{Papageorgiou}, resp. \cite{TranInvo}.
Later on, for the case $p=-q$ (pre) symplectic structures for reductions of Adler-Bobenko-Suris (ABS) equations were derived using the existence of the so-called three-leg form \cite{ABS, Tran_thesis}. Using the three-leg forms of ABS equations, we can derive Lagrangians for these equations which help us to find symplectic structures  for the case where $p=\pm q$.
This method is called the Hamiltonian-Lagrangian approach and it  was introduced in \cite{Bruschietal, Ves}.
For the case $p=q$,
 using the discrete Legendre transformation we can bring  reduction of potential Korteweg-de Vries (pKdV) to canonical coordinates \cite{Capel}.  For  the case $p\neq \pm q$,    the discrete analogue of the so-called  Ostrogradsky transformation  will help us to  bring the map to canonical coordinates  \cite{Blaszak, Bruschietal, Ves}; therefore we can derive Poisson structures of the original maps. We note that in the
continuous context, this transformation has been using  to construct Hamiltonians from  higher derivative Lagrangians. The Hamiltonian-Lagrangian approach was used in \cite{Hone2013}\footnote{Another approach to obtain Poisson structures used in \cite{Hone2013} is based on cluster algebras.} to find the Poisson structure of $p=1$ reductions of the discrete KdV equation.

In this paper we use the Ostrogradsky transformation to find Poisson structures for periodic reductions of several partial difference equations, including the discrete KdV equation, the plus-KdV equation, the discrete Lotka-Volterra equation, and the discrete Liouville equation.
All the partial difference equations we discuss (except for one) are integrable, i.e. they possess a Lax representation, and this Lax representation can be used to derive Lax representations for the reductions, in their turn yielding integrals of the reductions, cf. \cite{Papageorgiou, QCPN1991, Tran}. As examples, we explicitly show that $p=2$, $q=3$ reductions of the integrable partial difference equations are Liouville integrable in their own right, i.e. they possess a sufficient number of functionally independent integrals in involution with respect to the derived Poisson structures.

\section{Setting}
In this section, we give some basic notions for the rest of the paper. In particular, we will present the so-called discrete Ostrogradsky transformation to obtain Poisson brackets from a Lagrangian.
 \subsection{Completely integrable map} %In this section, we consider $d$-dimensional  maps $\varphi: {\bf x}\mapsto {\bf x}^{'}$, where ${\bf x}=(x_1,x_2,\ldots, x_d)\in \R^d$.
Recall that a map between two  Poisson manifolds is called Poisson if it preserves  the Poisson brackets. In particular, a $d$-dimensional mapping
\begin{align}
\label{C4E:dDimeMap}
\varphi: \R^d&\rightarrow \R^d\notag\\
(x_1,x_2,\ldots,x_d)&\mapsto (x_1^{'},x_2^{'},\ldots,x_d^{'})
\end{align}
is called a Poisson map if
\begin{equation}
\label{C4E:Poimap}
\{x_i^{'},x_j^{'}\}=\{x_i,x_j\}\mid_{{\bf x}={\bf x}^{'}},
\end{equation}
where $\{,\}$ denotes the Poisson bracket in $\R^d$ (the same bracket in the domain and the co-domain).
One sufficient condition for the map~\eqref{C4E:dDimeMap} to be Poisson is the existence of a structure matrix, $\Omega$, such that
\begin{equation}
\label{E:Poisson1}
d\varphi. \Omega(x).  d\varphi^T=\Omega(x^{'}),
\end{equation}
where $d\varphi$ is the  Jacobian of the map~\eqref{C4E:dDimeMap}.
By structure matrix, we mean that $\Omega$ is a $d\times d$ skew-symmetric matrix that satisfies the Jacobi identity \cite{Olver00}
\begin{equation}
\label{C1E:JacoIden}
\sum_{l}\big(\Omega_{li}\frac{\partial}{\partial x_l}\Omega_{jk}+\Omega_{lj}\frac{\partial}{\partial x_l}\Omega_{ki}
+\Omega_{lk}\frac{\partial}{\partial x_l}\Omega_{ij}\big)=0,\ \forall i,j,k.
\end{equation}
The Poisson bracket between 2 smooth functions $f$ and $g$ is defined through this structure by the following formula
\begin{equation}
\label{C4E:PoiDefine}
\{f,g\}=\nabla f .  \Omega . (\nabla g)^{T},
\end{equation}
where $\nabla f$ is the gradient of $f$. In this case, $\Omega_{ij}=\{x_i,x_j\}$. We can see that the entries in the LHS and RHS of (\ref{E:Poisson1})
are  $\{x_i^{'},x_j^{'}\}$ and $\{x_i,x_j\}\mid_{{\bf x}={\bf x}^{'}}$, respectively.

If the Poisson structure is non-degenerate, i.e., $\Omega$ has  full rank,  one can define a closed two-form
\[
w=\sum_{i<j} w_{ij}dx_i\wedge dx_j
\]
through the
Poisson structure
$W=\Omega^{-1}$ and vice versa.\footnote{In the literature the symbol $J$ is sometimes used instead of $W$. We reserve $J$ to denote the
Jacobian.} Then the Poisson map  is called a symplectic map and preserves the closed two-form $W$.
 For the case where  the closed two-form is preserved by the map~\eqref{C4E:dDimeMap} but degenerate, we call the two-form
 pre-symplectic \cite{Byrnes1999}.

 A function $I({\bf x})$ is called an integral of $\varphi$ if $I({\bf x})=I({\bf x}^{'})$. A set of smooth  functions $\{I_1, I_2,\ldots, I_k\}$  on an open domain $D$ is called functionally independent on $D$ if its Jacobian matrix $dI$ has full rank on a set dense in $D$.
\begin{definition}
\label{d:Liouville}
The map $\varphi$ is completely integrable if it  preserves a Poisson bracket of rank $2r\leq d$ and it has $r+d-2r=d-r$ functionally independent integrals which are pairwise in involution.
 \end{definition}

 \subsection{$(q,-p)$ reductions  of lattice equations}
In this section, we introduce the $(q,-p)$-reduction ($\gcd(p,q)=1, q,p>0$) to obtain ordinary difference equations from lattice equations.

Given a lattice equation (see Figure~\ref{F:reduction_map})
\begin{equation}
\label{E:lattice_eq}
Q(u_{l,m}, u_{l+1,m}, u_{l,m+1},u_{l+1,m+1})=0,
\end{equation}
the $(q,-p)$ reduction can be described as follows. We introduce the  periodicity condition $u_{l,m}=u_{l+q,m-p}$ and travelling wave reduction
$v_n=u_{l,m}$, where $n=lp+mq$. For $0\leq n \leq p+q-1$, since $\gcd(p,q)=1$ one can find unique $(l,m), l\geq 0, m\leq 0$ such that $|m|$ is smallest and
$n=lp+mq$.
The lattice equation reduces to the following ordinary difference equation
\begin{equation}
\label{E:ordinary_eq}
Q(v_n,v_{n+p},v_{n+q},v_{n+p+q})=0.
\end{equation}
This equation gives us a $(p+q)$-dimensional map
\begin{equation}
\label{E:map}
(v_n,v_{n+1},\ldots,v_{p+q-1})\mapsto (v_{n+1},v_{n+2},\ldots, v_{n+p+q}),
\end{equation}
where we have assumed that one can solve $v_{n+p+q}$ uniquely from  equation~\eqref{E:ordinary_eq}.
 For example, the $(3,-2)$-reduction is depicted in Figure~\ref{F:reduction_map}.
 \definecolor{qqqqff}{rgb}{0.,0.,1.}
 \begin{figure}[h]
 \begin{center}
\begin{tikzpicture}[line cap=round,line join=round,>=triangle 45,x=1.5cm,y=1.5cm]
\clip(-3.4030033473878976,-3.3169037917447777) rectangle (5.356870154756367,-0.6225403676158588);
\draw (-2.,-2.)-- (-2.,-3.);
\draw (-2.,-3.)-- (-1.,-3.);
\draw (-1.,-2.)-- (-2.,-2.);
\draw (-1.,-2.)-- (-1.,-3.);
\draw (0.,-1.)-- (2.,-1.);
\draw (2.,-1.)-- (2.,-2.);
\draw (2.,-2.)-- (3.,-2.);
\draw (3.,-2.)-- (3.,-3.);
\draw (3.,-3.)-- (5.,-3.);
\draw [dash pattern=on 2pt off 2pt] (3.,-2.)-- (4.,-2.);
\draw [dash pattern=on 2pt off 2pt] (4.,-2.)-- (4.,-3.);
\begin{scriptsize}
\draw [fill=black] (-2.,-2.) circle (1.0pt);
\draw[color=black] (-2.,-2.) node[anchor=south]{$u_{l,m+1}$};
\draw [fill=black] (-1.,-2.) circle (1.0pt);
\draw[color=black] (-1.,-2.) node[anchor=south] {$u_{l+1,m+1}$};
\draw [fill=black] (-1.,-3.) circle (1.0pt);
\draw[color=black] (-1.,-3.)  node[anchor=north] {$u_{l+1,m}$};
\draw [fill=black] (-2.,-3.) circle (1.0pt);
\draw[color=black] (-2.,-3)  node[anchor=north]{$u_{l,m}$};
\draw [fill=black] (0.,-1.) circle (1.0pt);
\draw[color=black] (0.,-1.) node[anchor=south] {$v_0$};
\draw [fill=black] (1.,-1.) circle (1.0pt);
\draw[color=black] (1.,-1.) node[anchor=south] {$v_2$};
\draw [fill=black] (2.,-1.) circle (1.0pt);
\draw[color=black] (2,-1.)  node[anchor=south]{$v_4$};
\draw [fill=black] (2.,-2.) circle (1.0pt);
\draw[color=black] (2.15,-2)  node[anchor=south] {$v_1$};
\draw [fill=black] (3.,-2.) circle (1.0pt);
\draw[color=black] (3.,-2)  node[anchor=south] {$v_3$};
\draw [fill=black] (4.,-2.) circle (1.0pt);
\draw[color=black] (4.,-2)  node[anchor=south] {$v_5$};
\draw [fill=black] (3.,-3.) circle (1.0pt);
\draw[color=black] (3.,-3)  node[anchor=north] {$v_0$};
\draw [fill=black] (4.,-3.) circle (1.0pt);
\draw[color=black] (4.,-3)  node[anchor=north] {$v_2$};
\draw [fill=black] (5.,-3.) circle (1.0pt);
\draw[color=black] (5.,-3)  node[anchor=north] {$v_4$};
\end{scriptsize}
\end{tikzpicture}
\caption{ Quad equation and the $(3,-2)$ reduction.}\label{F:reduction_map}
\end{center}
\end{figure}
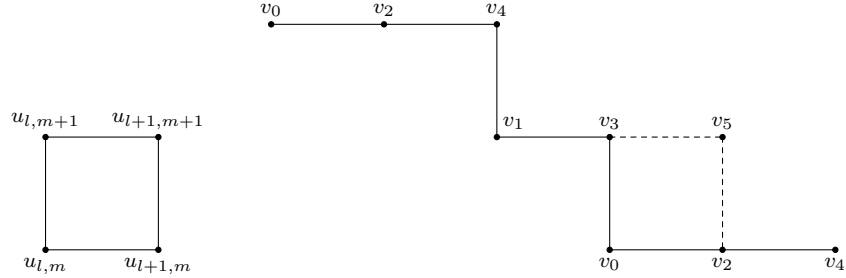

\subsection{Lagrangian equation and Ostrogradsky transformation}
%In this section,  present the method  given in \cite{Bruschietal}.
Given a mapping obtained from an Euler-Lagrange equation, Bruschi et al.  presented a transformation to rewrite this map in a canonical form \cite{Bruschietal}. We use this canonical form  to derive a Poisson structure
for mappings obtained from the Euler-Lagrange equation.

We start with the following symplectic  map without knowing its symplectic structure
\begin{equation}
\label{E:map1}
\phi : x(n)\mapsto x(n+1),
\end{equation}
 where $x(n),x(n+1)\in \R^{2N}$.
We rewrite the map in canonical form as follows
 \begin{equation}
 \label{C4E:canonical}
 \left(q(n),p(n)\right)\mapsto \left(q(n+1),p(n+1)\right),
 \end{equation}
  with $q_j(n+1)=g_j\left(q(n),p(n)\right)$ and $p_j(n+1)=f_j(q(n),p(n))$ and  such
that $\{f_i,f_j\}=\{g_i,g_j\}=0$ and $\{f_i,g_j\}=\delta_{ij}$ for $1\leq i,j\leq N$.

Suppose we have a Lagrangian $L(u(k),u(k+1),\ldots, u(k+N))$ and the discrete action functional
\begin{equation}
\label{C4E:DISfunctional}
I=\sum_{k\in\Z}L(u(k),u(k+1),\ldots, u(k+N)).
\end{equation}
The discrete Euler-Lagrange equation for the Lagrangian $L$ is
\[\frac{\delta I}{\delta u(n)}=0,\]
which means
\begin{multline}
\label{C4E:ELeq}
\frac{\partial L(u(n),u(n+1),\ldots,u(n+N))}{\partial u(n)}
+\frac{\partial L(u(n-1),u(n),\ldots,u(n+N-1))}{\partial u(n)}\notag \\
 + \ldots +\frac{\partial L(u(n-N),u(n-N+1),\ldots, u(n))}{\partial u(n)}
=0.
\end{multline}
We denote  $u^{(s)}=u(n+s)$ and $L_r=\frac{\partial L}{\partial u^{(r)}}$ and
introduce a shift operator $\mathcal{E}$, i.e,  $\mathcal{E}^j\left(u(n)\right)=u(n+j)$
for $j\in\Z$. The
Euler-Lagrange equation can be rewritten as follows
\[
\sum_{r=0}^N \mathcal{E}^{-r}L_r=0.
\]
\begin{definition}
\label{D:nomal}
A Lagrangian $L$ is called normal if $L_{0N}=\frac{\partial^2 L}{\partial u^{(0)}u^{(N)}}\neq 0$.
\end{definition}
Given a normal Lagrangian $L$,  we obtain a (non-linear) difference equation of order $2N$.
We introduce coordinates $q_i$ and  $p_i$, so that we can write the map derived from the Euler-Lagrange
equation in  canonical form. For example, we introduce a transformation
$$(u^{(-N)},\ldots,u^{(0)},\ldots, u^{(N-1)})\mapsto (q_1,q_2,\ldots,q_N,p_1,p_2,\ldots,p_N),$$
where
\begin{align}
q_i&=u^{(i-1)}=u(n+i-1), \label{C4E:OstraTran} \\
p_i&=\mathcal{E}^{-1}\sum_{k=0}^{N-i}\mathcal{E}^{-k}L_{k+i}, \label{C4E:OstraTran1}
\end{align}
for $i=1,2,\ldots, N$.
This transformation can be considered as an analog of  Ostrogradsky's transformation,  cf. \cite{DynaSys4}.
We have, using~\eqref{C4E:OstraTran},
\begin{equation}
\label{E:p1}
p_1=\sum_{k=1}^N\mathcal{E}^{-k}L_k=-L_0(q_1,q_2,\ldots,q_N, u^{(N)}),
\end{equation}
where we suppose that we can obtain
\[
u^{(N)}=\alpha(q,p_1).
\]
%The following result was presented in \cite{Bruschietal} without a proof.
We give a proof for this result.
\begin{lemma}
\label{L:mapcano}
The map~(\ref{C4E:canonical}) has the following canonical form
\begin{align}
q_i(n+1)&=g_i(q,p)=q_{i+1}(n), \ i=1,2,\ldots, N-1 , \label{E:qi}\\
q_N(n+1)&=g_N(q,p)=\alpha(q,p_1),\label{E:qN}\\
p_i(n+1)&=f_i(q,p)=p_{i+1}(n)+\widetilde{L}_i(q,p_1),\ i=1,2,\ldots , N-1, \label{E:pi}\\
p_N(n+1)&=f_N(q,p)=\widetilde{L}_N(q,p_1),\label{E:pN}
\end{align}
where we denote $\widetilde{L}(q,p_1)=L(q_1,q_2,\ldots, q_N, \alpha(q_1,q_2,\ldots, q_N,p_1))$.
\end{lemma}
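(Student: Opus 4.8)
The plan is to treat the Ostrogradsky map \eqref{C4E:OstraTran}--\eqref{C4E:OstraTran1} as a genuine change of coordinates, and to obtain the dynamics simply by expressing the advanced coordinates $(q(n+1),p(n+1))$ in terms of $(q(n),p(n))$ by applying the shift operator $\mathcal{E}$ to the defining formulas, using the Euler--Lagrange relation $\sum_{r=0}^N\mathcal{E}^{-r}L_r=0$ and Definition~\ref{D:nomal} where solvability is needed.

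First I would dispose of the $q$-equations. Since $q_i=u^{(i-1)}=u(n+i-1)$ by \eqref{C4E:OstraTran}, applying $\mathcal{E}$ gives $q_i(n+1)=u(n+i)=q_{i+1}(n)$ for $i=1,\dots,N-1$, which is \eqref{E:qi}. For $i=N$ we have $q_N(n+1)=u(n+N)=u^{(N)}$, so it remains to express $u^{(N)}$ through $(q,p)$. Here I would invoke \eqref{E:p1}: the Euler--Lagrange equation yields $p_1=-L_0(q_1,\dots,q_N,u^{(N)})$, and since $L$ is normal, $\partial L_0/\partial u^{(N)}=L_{0N}\neq0$, so the implicit function theorem lets me solve $u^{(N)}=\alpha(q,p_1)$ locally; this is precisely \eqref{E:qN}.

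The $p$-equations are the core of the argument and the step I expect to require the most careful bookkeeping. Starting from \eqref{C4E:OstraTran1}, applying $\mathcal{E}$ cancels the leading $\mathcal{E}^{-1}$ and gives $p_i(n+1)=\sum_{k=0}^{N-i}\mathcal{E}^{-k}L_{k+i}$. I would then split off the $k=0$ term, $L_i$, from the remaining sum $\sum_{k=1}^{N-i}\mathcal{E}^{-k}L_{k+i}$; re-indexing the latter by $j=k-1$ and comparing with the definition \eqref{C4E:OstraTran1} of $p_{i+1}$ shows that this tail equals exactly $p_{i+1}(n)$, leaving $p_i(n+1)=L_i+p_{i+1}(n)$. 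The delicate point is the evaluation: the leading term $L_i$ is taken at time $n$, hence at arguments $(u^{(0)},\dots,u^{(N)})=(q_1,\dots,q_N,u^{(N)})$, and substituting $u^{(N)}=\alpha(q,p_1)$ converts it into $\widetilde{L}_i(q,p_1)$, giving \eqref{E:pi}. Finally, for $i=N$ the sum degenerates to its single $k=0$ term, so $p_N(n+1)=L_N=\widetilde{L}_N(q,p_1)$, which is \eqref{E:pN}.

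The only genuine obstacle is consistency of the re-indexing together with tracking which $u$-values each shifted derivative depends on. One should verify that the tail $p_{i+1}(n)=\sum_{j=0}^{N-i-1}\mathcal{E}^{-(j+1)}L_{j+i+1}$ only involves $u$ at times up to $n+N-1$, so that $u^{(N)}$ occurs solely in the leading term $L_i$; once this is checked, the substitution $u^{(N)}=\alpha(q,p_1)$ is unambiguous and the identifications $L_i=\widetilde{L}_i$ and tail $=p_{i+1}(n)$ close the argument. No analytic input beyond the implicit function theorem (applied via normality to define $\alpha$) is needed.
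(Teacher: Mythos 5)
Your derivation of the four formulas \eqref{E:qi}--\eqref{E:pN} is correct and is essentially the paper's own computation: the $q$-equations and \eqref{E:pN} follow directly from \eqref{C4E:OstraTran}--\eqref{C4E:OstraTran1} and the definition of $\alpha$ via \eqref{E:p1} and normality, and for \eqref{E:pi} the paper performs exactly your split-off-the-$k=0$-term-and-reindex argument. However, this is only the short preliminary half of the proof. The word ``canonical'' in the statement refers back to \eqref{C4E:canonical}, which is defined to mean not merely that the map is written in the variables $(q,p)$ but that these variables are canonical for it, i.e.\ that $\{g_i,g_j\}=\{f_i,f_j\}=0$ and $\{f_i,g_j\}=\delta_{ij}$ with respect to the standard bracket. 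That symplecticity claim is the real content of the lemma --- it is what lets the paper pull the canonical bracket back through the Ostrogradsky transformation to obtain Poisson structures for the reductions --- and your proposal does not address it at all.

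The missing verification is not automatic. The brackets $\{g_i,g_j\}$ and $\{f_i,g_j\}$ for $i,j\leq N-1$ are immediate, but $\{f_N,g_N\}=1$, $\{f_i,f_j\}=0$ and $\{f_i,f_N\}=0$ require computing the derivatives of $\widetilde{L}_i$ and $\alpha$ with respect to $q_k$ and $p_1$ and observing nontrivial cancellations. The key input is the identity $p_1=-L_0(q_1,\ldots,q_N,\alpha(q,p_1))$: differentiating with respect to $p_1$ gives $1=-L_{0N}\,\partial\alpha/\partial p_1$, and differentiating with respect to $q_{i+1}$ gives $L_{0i}+L_{0N}\,\partial\alpha/\partial q_{i+1}=0$, i.e.\ $\partial\alpha/\partial q_{i+1}=L_{0i}\,\partial\alpha/\partial p_1$. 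These two relations are precisely what make the residual terms in $\{f_i,f_j\}$ and $\{f_i,f_N\}$ cancel and what evaluate $\{f_N,g_N\}$ to $1$. You need to add this computation (or at least state and use these two identities) to close the argument.
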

\begin{proof}
It is easy to obtain~(\ref{E:qi}), (\ref{E:qN}) and~(\ref{E:pN}).
For $1\leq i\leq N-1$, we have
\begin{align*}
p_i(n+1)&=\sum_{k=0}^{N-i}\mathcal{E}^{-k}L_{k+i}=\sum_{k=1}^{N-i}\mathcal{E}^{-k}L_{k+i}+\widetilde{L}_{i}
=\sum_{k=0}^{N-i-1}\mathcal{E}^{-k-1}L_{k+i+1}+\widetilde{L}_i\\
&=p_{i+1}(n)+\widetilde{L}_i.
\end{align*}
Now we  prove that the mapping~\eqref{C4E:canonical} preserves the canonical
symplectic structure, i.e., we show that $\{f_i,f_j\}=\{g_i,g_j\}=0$ and $\{f_i,g_j\}=\delta_{ij}$.
For $1\leq i,j\leq N-1$, we have
 \begin{align*}
 \{g_i,g_j\}&=\{q_{i+1},q_{j+1}\}=0,\\
 \{g_i,g_N\}&=\{q_{i+1},\widetilde{L}_N(q,p_1)\}=-\frac{\partial \widetilde{L}_N(q,p_1)}{\partial p_{i+1}}=0,\\
 \{f_i,g_j\}&=\{p_{i+1},q_{j+1}\}+\{\widetilde{L}_i(q,p_1),q_{j+1}\}=
 \delta_{i+1j+1}+\frac{\partial \widetilde{L}_i(q,p_1)}{\partial p_{j+1}}=\delta_{ij},\\
 \{f_N,g_N\}&=\{\widetilde{L}_N(q,p_1),\alpha_1(q,p_1)\}=\frac{\partial \widetilde{L}_N(q,p_1)}{\partial p_1}
 \frac{\partial \alpha(q,p_1) }{\partial q_1}-\frac{\partial \widetilde{L}_N(q,p_1)}{\partial q_1}
 \frac{\partial \alpha(q,p_1) }{\partial p_1}\\
 &=\left(\L_{NN}\frac{\partial u^{(N)}}{\partial p_1}\frac{\partial u^{(N)}}{\partial q_1}-\big(L_{NN}\frac{\partial u^{(N)}}{\partial q_1}+L_{0N}\big)\frac{\partial u^{(N)}}{\partial p_1}\right)
 \mid_{u^{(0)}=q_1, \ldots, u^{(N-1)}=q_N,u^{(N)}=\alpha(q,p_1)}\\
 &=-L_{N0}\frac{\partial u^{(N)}}{\partial p_1}=\frac{\partial p_1}{\partial p_1}=1\\
 %\end{align*}
%\begin{align*}
\{f_i,f_j\}&=\{p_{i+1},p_{j+1}\}+\{p_{i+1},\widetilde{L}_j(q,p_1)\}+\{\widetilde{L}_i(q,p_1),p_{j+1}\}
 +\{\widetilde{L}_i(q,p_1),\widetilde{L}_j(q,p_1)\}\\
 &=\frac{\partial \widetilde{L}_j(q,p_1)}{\partial q_{i+1}}-\frac{\partial \widetilde{L}_i(q,p_1)}{\partial q_{j+1}}+
 \frac{\partial \widetilde{L}_i}{\partial p_1}\frac{\partial \widetilde{L}_j}{\partial q_1}-
 \frac{\partial \widetilde{L}_i}{\partial q_1}\frac{\partial \widetilde{L}_j}{\partial p_1}\\
 &=\Big(L_{ij}+L_{Nj}\frac{\partial u^{(N)}}{\partial q_{i+1}}-L_{ji}-L_{Ni}\frac{\partial u^{(N)}}{\partial q_{j+1}}+
 L_{Ni}\frac{\partial u^{(N)}}{\partial p_1}\big(L_{0j}+L_{Nj}\frac{\partial u^{(N)}}{\partial q_{1}}\big)\\
 & \quad -\big(L_{0i}+L_{Ni}\frac{\partial u^{(N)}}{\partial q_{1}}\big)L_{Nj}\frac{\partial u^{(N)}}{\partial p_1}\Big)\mid_{u^{(0)}=q_1, \ldots, u^{(N-1)}=q_N, u^{(N)}=\alpha(q,p_1)}\\
 &=\Big(L_{Nj}\frac{\partial u^{(N)}}{\partial q_{i+1}}-L_{Ni}\frac{\partial u^{(N)}}{\partial q_{j+1}}
 + L_{Ni}\frac{\partial u^{(N)}}{\partial p_1}L_{0j}
 \\
 &\quad -L_{0i}L_{Nj}\frac{\partial u^{(N)}}{\partial p_1}\Big)\mid_{u^{(0)}=q_1, \ldots, u^{(N-1)}=q_N,\ldots, u^{(N)}=\alpha(q,p_1)}\\
 \{f_i,f_N\}&=\{p_{i+1},\widetilde{L}_N(q,p_1)\}+\{\widetilde{L}_i(q,p_1),\widetilde{L}_N(q,p_1)\} \\
 &=\frac{\widetilde{L}_N(q,p_1)}{q_{i+1}}+\frac{\partial \widetilde{L}_i}{\partial p_1}\frac{\partial \widetilde{L}_N}{\partial q_1}-
 \frac{\partial \widetilde{L}_i}{\partial q_1}\frac{\partial \widetilde{L}_N}{\partial p_1}\\
 &=\Big(L_{iN}+L_{NN}\frac{\partial u^{(N)}}{\partial q_{i+1}}+L_{iN}L_{0N}\frac{\partial u^{(N)}}{\partial p_1}
 %\\
% &\quad
 -L_{i0}L_{NN}\frac{\partial u^{(N)}}{\partial p_1}\Big)\mid_{u^{(0)}=q_1,  u^{(N-1)}=q_N, u^{(N)}=\alpha(q,p_1)}.
 \end{align*}
 %\end{verbatim}
 We have $p_1=-L_0\mid_{u^{(0)}=q_1, \ldots, u^{(N-1)}=q_N, u^{(N)}=\alpha(q,p_1)}$, therefore
 \begin{align*}
 1&=-L_{0N}\frac{\partial u^{(N)}}{\partial p_1}\mid_{u^{(0)}=q_1, \ldots, u^{(N-1)}=q_N, u^{(N)}=\alpha(q,p_1)},\\
 0&=\left(-L_{0i}-L_{0N}\frac{\partial u^{(N)}}{\partial q_{i+1}}\right)\mid_{u^{(0)}=q_1, \ldots, u^{(N-1)}=q_N, u^{(N)}=\alpha(q,p_1)}.
 \end{align*}
 Thus, we get $\frac{\partial u^{(N)}}{\partial q_{i+1}}=L_{0i}\frac{\partial u^{(N)}}{\partial p_1}$. This implies that, $\{f_i,f_j\}=0$
 and $\{f_i,f_N\}=0$.
\end{proof}

 \section{Poisson brackets for $(q,-p)$ reductions of the KdV and the plus-KdV equations}
The KdV equation and a similar equation which we call the plus-KdV equation are given as follows
\begin{align}
u_{l+1,m}-u_{l,m+1}&=\frac{1}{u_{l,m}}-\frac{1}{u_{l+1,m+1}},\label{E:KdV}\\
u_{l+1,m}+u_{l,m+1}&=\frac{1}{u_{l,m}}+\frac{1}{u_{l+1,m+1}},\label{E:plusKdV}
\end{align}
respectively.
It  is known that KdV can be obtained from a double copy of potential KdV-equation or $H_1$ by introducing the new variable $v$ where $u_{l,m}=v_{l-1,m-1}-v_{l,m}$.
Similarly, for plus-KdV, we introduce variable $v$ such that $u_{l,m}=v_{l-1,m-1}+v_{l,m}$. In terms of variables $v$, equations~\eqref{E:KdV} and~\eqref{E:plusKdV}  become
\begin{align}
(v_{l+1,m}+v_{l-1,m}) - (v_{l,m-1}+v_{l,m+1})+\frac{1}{v_{l-1,m-1}-v_{l,m}}-\frac{1}{v_{l,m}-v_{l+1,m+1}}&=0,\label{E:KdV_new}\\
v_{l+1,m}+v_{l-1,m}+v_{l,m-1}+v_{l,m+1}-\frac{1}{v_{l-1,m-1}+v_{l,m}}-\frac{1}{v_{l,m}+v_{l+1,m+1}}&=0\label{E:plusKdV_new}.
\end{align}
These equations are Euler-Lagrange equations with the corresponding Lagrangians
\begin{align}
L^{1}_{l,m}&=v_{l,m}v_{l+1,m}-v_{l,m}v_{l,m+1}-\ln |v_{l,n}-v_{l+1,m+1}|, \label{E:Lag_KdV}\\
L^{2}_{l,m}&=v_{l,m}v_{l+1,m}+v_{l,m}v_{l,m+1}-\ln |v_{l,n}+v_{l+1,m+1}| \label{E:Lag_plus_KdV}.
\end{align}

The $(q,-p)$ reduction where $\gcd(p,q)=1$ and $p<q$ gives us the following Lagrangians
\begin{align}
L^{1}_n&=v_nv_{n+p}-v_{n}v_{n+q}-\ln |v_{n}-v_{n+p+q}|,\\
L^{2}_n&=v_nv_{n+p}+v_{n}v_{n+q}-\ln |v_{n}+v_{n+p+q}|.
\end{align}
The Euler-Lagrange equations derived from these Lagrangians are
\begin{align}
v_{n+p}+v_{n-p}-(v_{n+q}+v_{n-q})-\frac{1}{v_n-v_{n+p+q}}+\frac{1}{v_{n-p-q}-v_{n}}&=0,\label{E:double_KdV}\\
v_{n+p}+v_{n-p}+(v_{n+q}+v_{n-q})-\frac{1}{v_n+v_{n+p+q}}-\frac{1}{v_{n-p-q}+v_{n}}&=0,\label{E:double_plus_KdV}
\end{align}
which are the $(q,-p)$ reduction of  KdV and plus-KdV via reduction $u_{n}=v_{n-p-q}-v_n$.
In canonical coordinates, for a double copy of KdV we  have
\begin{align*}
Q_i&=v_{n+i-1},\ \mbox{for}\ 1\leq i\leq p+q,\\
P_1&=-(v_{n+p}-v_{n+q}-\frac{1}{v_n-v_{n+p+q}})=v_{n-p}-v_{n-q}+\frac{1}{v_{n-p-q}-v_n},\\
P_i&=v_{n+i-p-1}-v_{n+i-q-1}+\frac{1}{v_{n+i-p-q-1}-v_{n+i-1}},\ \mbox{for}\ 1<i\leq p,\\
P_i&=-v_{n+i-q-1}+\frac{1}{v_{n+i-p-q-1}-v_{n+i-1}},\ \mbox{for}\ p<i\leq q,\\
P_i&=\frac{1}{v_{n+i-p-q-1}-v_{n+i-1}},\ \mbox{for}\ q<i\leq p+q.
\end{align*}
For the plus-KdV equation, we have
\begin{align*}
Q_i&=v_{n+i-1},\ \mbox{for}\ 1\leq i\leq p+q,\\
P_1&=-(v_{n+p}+v_{n+q}-\frac{1}{v_n+v_{n+p+q}})=v_{n-p}+v_{n-q}-\frac{1}{v_{n-p-q}+v_n},\\
P_i&=v_{n+i-p-1}+v_{n+i-q-1}-\frac{1}{v_{n+i-p-q-1}+v_{n+i-1}},\ \mbox{for}\ 1<i\leq p,\\
P_i&=v_{n+i-q-1}-\frac{1}{v_{n+i-p-q-1}+v_{n+i-1}},\ \mbox{for}\ p<i\leq q,\\
P_i&=-\frac{1}{v_{n+i-p-q-1}+v_{n+i-1}},\ \mbox{for}\ q<i\leq p+q.
\end{align*}
Using the canonical Poisson bracket for these variables we obtain Poisson brackets for mappings obtained as the $(q,-p)$ reductions of
equations~\eqref{E:KdV} and~\eqref{E:plusKdV}. For both equations, we distinguish between two cases where $p=1$ and $p>1$.
\begin{theorem}
The KdV map admits the following  Poisson structure for $0\leq i <j\leq p+q-1$.
\begin{itemize}
\item If $p=1$,  we have
\begin{equation}
\label{E:poi_KdV}
\{u_{n+i},u_{n+j}\}=
\left\{
\begin{array}{ll}
(-1)^{j-i}\prod_{r=i}^j u_{n+i}^2, & 0<j-i<p+q-1,\\
\big(1+(-1)^{p+q-1}\prod_{r=i+1}^{j-1}u_{n+r}^2\big)u_{n+i}^2u_{n+j}^2, & j-i=p+q-1.
\end{array}
\right.
\end{equation}
\item If $p>1$, we have
\end{itemize}
\begin{equation}
\label{E:poi_KdV1}
\{u_{n+i},u_{n+j}\}=
\left\{
\begin{array}{ll}
u_{n+i}^2u_{n+j}^2& \mbox{if}\  j-i=q,\\
(-1)^{(j-i)/p}\prod_{k=0}^{(j-i)/p}u_{n+i+kp}^2,& \mbox{if} \ j-i\equiv 0 \mod p,\\
0&\mbox{otherwise}.
\end{array}
\right.
\end{equation}
\end{theorem}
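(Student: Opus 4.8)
The plan is to realise every $u_{n+i}$, $0\le i\le p+q-1$, as a function on the canonical phase space of the Lagrangian (dimension $2(p+q)$, coordinates $Q_1,\dots,Q_{p+q},P_1,\dots,P_{p+q}$) and to evaluate all brackets from the canonical relations $\{P_a,Q_b\}=\delta_{ab}$, $\{Q_a,Q_b\}=\{P_a,P_b\}=0$. The starting point is that the displayed expressions for the $P_i$ collect into the single inverse relation
\[
R_i:=\frac{1}{u_{n+i}}=P_{i+1}-v_{n+i-p}+v_{n+i-q},\qquad 0\le i\le p+q-1,
\]
where the term $v_{n+i-p}$ is kept only when $i<p$ and $v_{n+i-q}$ only when $i<q$ (this is exactly the case split defining $P_i$, together with $u_{n+k}=v_{n+k-p-q}-v_{n+k}$). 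Since $u_{n+i}=1/R_i$, bilinearity gives immediately
\[
\{u_{n+i},u_{n+j}\}=u_{n+i}^2\,u_{n+j}^2\,\{R_i,R_j\},
\]
which already accounts for the factors $u_{n+i}^2u_{n+j}^2$ in both formulas; it remains to compute $\{R_i,R_j\}$.

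First I would turn the inverse relation into a backward recursion. Using $u_{n+i+q}=v_{n+i-p}-v_{n+i+q}$ and $u_{n+i+p}=v_{n+i-q}-v_{n+i+p}$, the past values are eliminated by $v_{n+i-p}=v_{n+i+q}+u_{n+i+q}$ and $v_{n+i-q}=v_{n+i+p}+u_{n+i+p}$, where $v_{n+i+q},v_{n+i+p}$ are the present coordinates $Q_\bullet$. Hence
\[
R_i=P_{i+1}-[\,i<p\,]\bigl(Q_{i+q+1}+u_{n+i+q}\bigr)+[\,i<q\,]\bigl(Q_{i+p+1}+u_{n+i+p}\bigr),
\]
so $u_{n+i}=1/R_i$ is expressed through $P_{i+1}$, the positions $Q_\bullet$, and the two variables $u_{n+i+q},u_{n+i+p}$ of strictly larger index; for $i\ge q$ one has simply $u_{n+i}=1/P_{i+1}$, which is the base of the recursion.

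Next I would compute the auxiliary brackets $\{u_{n+j},Q_k\}=\partial u_{n+j}/\partial P_k$ and $\{u_{n+j},P_k\}$. Differentiating the recursion and using $\{u_{n+j},X\}=-u_{n+j}^2\{R_j,X\}$ gives linear recursions in $j$, e.g.
\[
\{u_{n+j},Q_k\}=-u_{n+j}^2\Bigl(\delta_{j+1,k}-[\,j<p\,]\{u_{n+j+q},Q_k\}+[\,j<q\,]\{u_{n+j+p},Q_k\}\Bigr),
\]
and an analogous one with $Q_k$ replaced by $P_k$; both terminate because the index increases in steps of $p$ or $q$ and is bounded by $p+q-1$. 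With these in hand, expanding $\{R_i,u_{n+j}\}$ by the Leibniz rule reduces $\{u_{n+i},u_{n+j}\}$ to the already known brackets $\{u_{n+i+q},u_{n+j}\}$ and $\{u_{n+i+p},u_{n+j}\}$ (both of larger first index) together with the auxiliary terms, so the theorem follows by downward induction on $i$, carried out separately for $p=1$ and for $p>1$.

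The main obstacle is precisely this induction. The coupling to $u_{n+i+q}$ contributes the single off-diagonal bracket at shift $j-i=q$, while the coupling to $u_{n+i+p}$ chains through multiples of $p$ and, after repeated substitution, must be shown to telescope into the signed product $(-1)^{(j-i)/p}\prod_{k}u_{n+i+kp}^2$ when $p\mid(j-i)$ and to cancel to $0$ otherwise; since $\gcd(p,q)=1$ with $p>1$ forces $p\nmid q$, the two patterns never collide. For $p=1$ every shift is a multiple of $p$, the two couplings reinforce, and the single wrap-around pair $(0,p+q-1)$ produces the extra summand $1$ together with the correction factor $\bigl(1+(-1)^{p+q-1}\prod_{r=i+1}^{j-1}u_{n+r}^2\bigr)$; keeping the signs and the boundary terms consistent here is the most delicate part of the bookkeeping. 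Finally, because all the resulting brackets are expressed in the $u$-variables alone, they close and inherit the Jacobi identity from the canonical bracket, so they genuinely define a Poisson structure.
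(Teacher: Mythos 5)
Your argument is sound, but it is a genuinely different proof from the one in the paper. The paper does not derive \eqref{E:poi_KdV1} from the canonical coordinates inside the proof: it treats the bracket as an ansatz and verifies two things by direct computation, namely that the bracket is preserved under the map \eqref{E:KdV_map} (using the explicit derivatives of $u_{n+p+q}$ with respect to $u_n,u_{n+p},u_{n+q}$ and a case split on $i+1$ relative to $p$, $q$ and divisibility by $p$) and that the Jacobi identity \eqref{C1E:JacoIden} holds (another case analysis); the $p=1$ case is simply quoted from the literature. You instead push the canonical bracket of Lemma~\ref{L:mapcano} forward through the inverted Ostrogradsky relations $1/u_{n+i}=P_{i+1}-v_{n+i-p}+v_{n+i-q}$, eliminate the past values via $v_{n+i-p}=Q_{i+q+1}+u_{n+i+q}$ and $v_{n+i-q}=Q_{i+p+1}+u_{n+i+p}$, and compute every bracket by downward induction from the base case $u_{n+j}=1/P_{j+1}$ for $j\ge q$. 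The key instances check out ($j-i=q$ arises from the $Q_{i+q+1}$ term, $j-i=p$ from the $\delta_{i+p,j}$ term, the $p$-chain telescopes into $(-1)^{(j-i)/p}\prod_k u_{n+i+kp}^2$, and the remaining brackets vanish), so the recursion does reproduce \eqref{E:poi_KdV1}. What your route buys is that the Jacobi identity is inherited from the canonical bracket once the brackets are seen to close on the $u$'s, and that the formula is derived rather than guessed; what it costs is that the case-by-case downward induction, which you correctly identify as the main burden, is only sketched and would need to be written out with the same care the paper devotes to its own case analysis. One point you must still make explicit: the theorem asserts that the KdV map \emph{admits} this structure, i.e.\ is a Poisson map for it, and your closing sentence only addresses Jacobi. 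In your framework this, too, comes for free --- the map is canonical in $(Q,P)$ by Lemma~\ref{L:mapcano} and sends $u_{n+i}\mapsto u_{n+i+1}$, so closure of the brackets on the $u$'s implies the induced bracket is preserved --- but it has to be stated.
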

\begin{proof}
We note that for the case $p=1$, the Poisson bracket was given in \cite{Hone2013}. We just need to prove this theorem for the case where $p>1$.

We first need to prove that the Poison bracket~\eqref{E:poi_KdV1} is preserved under the KdV map
\begin{equation}
\label{E:KdV_map}
\phi:\ (u_n,u_{n+1},\ldots, u_{n+p+q-1})\mapsto (u_{n+1},u_{n+2},\ldots, u_{n+p+q-1},u_{n+p+q}),
\end{equation}
where
\[
u_{n+p+q}=\frac{-u_n}{u_nu_{n+p}-u_nu_{n+q}-1}.
\]
By direct calculation, we have
\begin{equation}
\frac{\partial u_{n+p+q}}{\partial u_n}=-\frac{u_{n+p+q}^2}{u_n^2},\ \frac{\partial u_{n+p+q}}{\partial u_{n+p}}=u_{n+p+q}^2,\ \frac{\partial u_{n+p+q}}{\partial u_{n+q}}=-u_{n+p+q}^2.
\end{equation}
It is easy to see that for $0\leq i< j<p+q-1$, we have
\[
\{u_{n+i}^{'},u_{n+j}^{'}\}=\{u_{n+i+1},u_{n+j+1}\}=\{u_{n+i},u_{n+j}\}|_{{\bf{u}}=\phi({\bf{u}})}.
\]
For $0\leq i< j=p+q-1$, we have
\begin{align}
\label{E:Proof_Poisson}
\{u_{n+i}^{'},u_{n+p+q-1}^{'}\}&=\{u_{n+i+1},u_{n+p+q}\}\notag
\\
&=\{u_{n+i+1},u_n\}\frac{u_{n+p+q}^2}{u_n^2}+\{u_{n+i+1},u_{n+p}\}u_{n+p+q}^2-\{u_{n+i+1},u_{n+q}\}u_{n+p+q}^2.
\end{align}
By considering whether $p|(i+1)$,  one can see easily  that if  $i+1\neq q$ and $i+1\neq q$, then
\begin{equation}
\label{E:Proof_Poisson_1}
\{u_{n+i+1},u_n\}\frac{u_{n+p+q}^2}{u_n^2}+\{u_{n+i+1},u_{n+p}\}u_{n+p+q}^2=0.
\end{equation}
We consider $3$ cases.
\begin{itemize}
\item Case 1 where $p+q-i-1\neq q$ and $p\nmid (p+q-i-1)$. This means $i+1\neq p$ and $p\nmid q-i-1$.  Therefore, $\{u_{n+i+1},u_{n+q}\}=0$ and
$i+1\neq q$, $|i+1-p|\neq 0,\ \mbox{and}\ q$. This also  gives us \eqref{E:Proof_Poisson_1}. Thus,
$\{u_{n+i}^{'},u_{n+p+q-1}^{'}\}=0=\{u_{n+i},u_{n+p+q-1}\}|_{{\bf{u}}=\phi({\bf{u}})}.$
\item Case 2  where $p+q-i-1=q$, i.e. $i+1=p$. It implies that the second term and third term in \eqref{E:Proof_Poisson} vanish. Hence, we have
\begin{align*}
\{u_{n+p-1}^{'},u_{n+p+q-1}^{'}\}&
=\{u_{n+p},u_n\}\frac{u_{n+p+q}^2}{u_n^2}=u_{n+p}^2u_n^2\frac{u_{n+p+q}^2}{u_n^2}
=u_{n+p}^2u_{n+p+q}^2
\\
&
=u_{n+p-1}^2u_{n+p+q-1}^2|_{{\bf{u}}=\phi({\bf{u}})}=\{u_{n+p-1},u_{n+p+q-1}\}|_{{\bf{u}}=\phi({\bf{u}})}.
\end{align*}
\item Case 3 where $p|(p+q-i-1)$, i.e. $p|(q-i-1)$.
If $i+1=q$, the second and  third term in \eqref{E:Proof_Poisson} equals to $0$. Therefore, we have
\begin{align*}
\{u_{n+q-1}^{'},u_{n+p+q-1}^{'}\}&
=\{u_{n+q},u_n\}\frac{u_{n+p+q}^2}{u_n^2}
=-u_{n+q}^2u_{n+p+q}^2
\\
&
=-u_{n+q-1}^2u_{n+p+q-1}^2|_{{\bf{u}}=\phi({\bf{u}})}=\{u_{n+q-1},u_{n+p+q-1}\}|_{{\bf{u}}=\phi({\bf{u}})}.
\end{align*}
If $i+1\neq q$ and $i+1\neq p$, we have \eqref{E:Proof_Poisson_1}. Since $i+1-q<p$, we have $i+1-q<0$. Thus, \eqref{E:Proof_Poisson} becomes
\begin{align*}
\{u_{n+i}^{'},u_{n+p+q-1}^{'}\}&
=\{u_{n+i+1},u_{n+q}\}u_{n+p+q}^2
=-(-1)^{(q-i-1)/p}u_{n+i+1}^2u_{n+i+1+p}^2u_{n+q}u_{n+q+p^2}
\\
&
=(-1)^{(p+q-i-1)/p}\prod_{k=0}^{(p+q-i-1)/p}u_{n+i+1+kp}^2
\{u_{n+i-1},u_{n+p+q-1}\}|_{{\bf{u}}=\phi({\bf{u}})}.
\end{align*}
\end{itemize}
To complete this proof,  we need to prove the Jacobian identity~\eqref{C1E:JacoIden}:
\begin{equation}
\label{E:JacoIden1}
\sum_{l}\big(\Omega_{li}\frac{\partial}{\partial u_{n+l}}\Omega_{jk}+\Omega_{lj}\frac{\partial}{\partial u_{n+l}}\Omega_{ki}
+\Omega_{lk}\frac{\partial}{\partial u_{n+l}}\Omega_{ij}\big)=0,\ \forall i,j,k.
\end{equation}

Without loss of generality, we can assume that $0\leq i\leq j\leq k\leq p+q-1$. It is easy to see that the Jacobian identity holds if two indices  in $(i,j,k)$ are equal. We need to prove the  case where $0\leq i <j< k\leq p+q-1$. We also distinguish $3$ cases.
\begin{itemize}
\item Case 1 where $j-i=q$.  Since $k<p+q$, we have $0<k-j<p$. Therefore, $\Omega_{jk}=0$. Furthermore,  one can see easily that
\[
\sum_{l=0}^{p+q-1}\Omega_{lk}\frac{\partial \Omega_{ij}}{\partial u_{n+l}}=\frac{2\Omega_{ik}\Omega_{ij}}{u_{n+i}}+\frac{2\Omega_{jk}\Omega_{ij}}{u_{n+j}}=\frac{2\Omega_{ik}\Omega_{ij}}{u_{n+i}}.
\]
If $p\nmid(k-i)$, then $\Omega_{ki}=0$. Thus,   the  Jacobian identity~\eqref{E:JacoIden1} holds.
If $p|(k-i)$, then we have
\[
\sum_{l}\Omega_{lj}\frac{\partial \Omega_{ki}}{\partial u_{n+l}}\Omega_{ki}=
\sum_{i<l\leq k,\ l\equiv i\mod p}\frac{2\Omega_{lj}\Omega_{ki}}{u_{n+l}}+\frac{2\Omega_{ij}\Omega_{ki}}{u_{n+i}}=-\frac{2\Omega_{ij}\Omega_{ik}}{u_{n+i}},
\]
where we have used $\Omega_{lj}=0$ as $p\nmid (j-l)$ and $j-l\neq \pm q$ for $i<l\leq k$ and $p|(l-i)$.
\item Case 2 where $p|(j-i)$. We have
\begin{equation}
\label{E:Jaco_term_3}
\sum_{l=0}^{p+q-1}\Omega_{lk}\frac{\partial \Omega_{ij}}{\partial u_{n+l}}=
\sum_{i\leq l\leq j,\  l\equiv i\mod p}\frac{2\Omega_{lk}\Omega_{ij}}{u_{n+l}}.
\end{equation}
Since $p\leq j-i$, we have $k-j<q$.
Thus, if $k-i=q$, then $p\nmid (k-j)$. This yields
$\Omega_{jk}=0$ and
\[
\sum_{l}\Omega_{lj}\frac{\partial \Omega_{ki}}{\partial u_{n+l}}\Omega_{ki}=
\frac{2\Omega_{ij}\Omega_{ki}}{u_{n+i}}+\frac{2\Omega_{kj}\Omega_{ki}}{u_{n+k}}=\frac{2\Omega_{ij}\Omega_{ki}}{u_{n+i}}
\]
However, for $i<l\leq j$ and $l\equiv i\mod p$, we have  $\Omega_{lk}=0$. Therefore, the RHS of~\eqref{E:Jaco_term_3} becomes $2\Omega_{ik}\Omega_{ij}/u_{n+i}$. Hence,  we obtain the Jacobian identity~\eqref{E:JacoIden1}.

If $k-i\neq q$ and $p\nmid k-i$, then $\Omega_{jk}=\Omega_{ki}=\Omega_{lk}=0$ for $i<l<j$ and $l\equiv i\mod p$ (as $k-l<p+q-(i+p)\leq q$ and $p\nmid (k-l)$.
This implies that $LHS\ \eqref{E:Jaco_term_3}=0$. Thus, the Jacobian identity~\eqref{E:JacoIden1} holds.

If $p|(k-i)$, i.e. $i\equiv j\equiv k\mod p$ then the LHS of the Jacobian identity~\eqref{E:JacoIden1}
 is
 \begin{equation*}
 LHS\ \eqref{E:JacoIden1}=
\sum_{j\leq l\leq k,\  l\equiv i\mod p}\frac{2\Omega_{li}\Omega_{jk}}{u_{n+l}}.
+\sum_{i\leq l\leq k,\  l\equiv i\mod p}\frac{2\Omega_{lj}\Omega_{ki}}{u_{n+l}}.
+ \sum_{i\leq l\leq j,\  l\equiv i\mod p}\frac{2\Omega_{lk}\Omega_{ij}}{u_{n+l}}.
 \\
 \end{equation*}
 It is easy to see that the three terms in this sum can be written as follows
 \begin{align*}
\sum_{j\leq l\leq k,\  l\equiv i\mod p}\frac{2\Omega_{li}\Omega_{jk}}{u_{n+l}}
&=-\sum_{j< l\leq k,\  l\equiv i\mod p}\frac{2\Omega_{ik}\Omega_{jl}}{u_{n+l}}+\frac{2\Omega_{ji}\Omega_{jk}}{u_{n+j}},
\\
\sum_{i\leq l\leq k,\  l\equiv i\mod p}\frac{2\Omega_{lj}\Omega_{ki}}{u_{n+l}}
&
=\sum_{i\leq l<j,\  l\equiv i\mod p}\frac{2\Omega_{lj}\Omega_{ki}}{u_{n+l}}
+\sum_{j<l\leq k,\  l\equiv i\mod p}\frac{2\Omega_{lj}\Omega_{ki}}{u_{n+l}}
\\
  \sum_{i\leq l\leq j,\  l\equiv i\mod p}\frac{2\Omega_{lk}\Omega_{ij}}{u_{n+l}}
  &= \sum_{i\leq l< j,\  l\equiv i\mod p}\frac{2\Omega_{ik}\Omega_{lj}}{u_{n+l}}+\frac{2\Omega_{jk}\Omega_{ij}}{u_{n+j}}.
 \end{align*}
 Adding these three terms together we get $LHS\ \eqref{E:JacoIden1}=0$.

 \item Case 3 where $j-i\neq q$ and $p\nmid (j-i)$, i.e. we have $\Omega_{ij}=0$. Similarly, by considering the following sub-cases $k-j=q$; $p|(k-j)$; $k-j\neq q,\ \mbox{and} p\nmid (k-j)$, we can prove that the Jacobian identity~\eqref{E:JacoIden1} holds.
\end{itemize}
\end{proof}

Analogously, we can derive the following result for the map obtained from the plus-KdV.
\begin{theorem}The  map obtained as the $(q,-p)$ reduction of equation~\eqref{E:plusKdV} admits the following  Poisson structure for $0\leq i <j\leq p+q-1$.
\begin{itemize}
\item If $p=1$,  we have
\begin{equation}
\label{E:poi_plus_KdV1}
\{u_{n+i},u_{n+j}\}=
\left\{
\begin{array}{ll}
(-1)^{j-i}\prod_{r=i}^j u_{n+i}^2, & 0<j-i<p+q-1,\\
\big(-1+(-1)^{p+q-1}\prod_{r=i+1}^{j-1}u_{n+r}^2\big)u_{n+i}^2u_{n+j}^2, & j-i=p+q-1.
\end{array}
\right.
\end{equation}
\item If $p>1$, we have
\end{itemize}
\begin{equation}
\label{E:poi_plus_KdV}
\{u_{n+i},u_{n+j}\}=
\left\{
\begin{array}{ll}
-u_{n+i}^2u_{n+j}^2& \mbox{if}\  j-i=q,\\
(-1)^{(j-i)/p}\prod_{k=0}^{(j-i)/p}u_{n+i+kp}^2,& \mbox{if} \ j-i\equiv 0 \mod p,\\
0&\mbox{otherwise}.
\end{array}
\right.
\end{equation}
\end{theorem}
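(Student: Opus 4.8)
The plan is to follow the template of the preceding theorem essentially verbatim, exploiting the fact that the plus-KdV reduction differs from the KdV reduction only by a handful of signs. For $p=1$ every gap is a multiple of $p$, the structure collapses to the form \eqref{E:poi_plus_KdV1}, and the argument is the exact analogue of the KdV computation (cf.\ \cite{Hone2013}); I would dispose of that case first and then concentrate on $p>1$, where the genuinely new behaviour appears. As for KdV there are two things to establish: that the bracket \eqref{E:poi_plus_KdV} is invariant under the reduced map, and that its structure matrix $\Omega^+$ satisfies the Jacobi identity \eqref{E:JacoIden1}.

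First I would solve the reduced plus-KdV equation $u_{n+p}+u_{n+q}=1/u_n+1/u_{n+p+q}$ for the leading variable, obtaining $u_{n+p+q}=u_n/(u_nu_{n+p}+u_nu_{n+q}-1)$, and record its three nonzero partial derivatives $\partial u_{n+p+q}/\partial u_n=-u_{n+p+q}^2/u_n^2$, $\partial u_{n+p+q}/\partial u_{n+p}=-u_{n+p+q}^2$ and $\partial u_{n+p+q}/\partial u_{n+q}=-u_{n+p+q}^2$, all now carrying a minus sign. Relative to the KdV map the pertinent change is the sign of $\partial u_{n+p+q}/\partial u_n$, and this is precisely what turns the $j-i=q$ entry from $+u_{n+i}^2u_{n+j}^2$ into $-u_{n+i}^2u_{n+j}^2$.

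Invariance is then checked as before. For $0\le i<j<p+q-1$ the bracket is preserved trivially, since $\phi$ merely shifts the index. For $j=p+q-1$ I would expand $\{u_{n+i+1},u_{n+p+q}\}$ by the chain rule into the three derivatives above and split into the same three cases used for KdV, according to whether the target $\{u_{n+i},u_{n+p+q-1}\}$ is zero, a $q$-gap entry (the case $i+1=p$), or a $\bmod\,p$ entry (the case $p\mid(q-1-i)$). In the $\bmod\,p$ cases the contributions of the $\partial_{u_n}$ and $\partial_{u_{n+p}}$ terms again cancel in pairs, exactly as for KdV, so those identities are reproduced unchanged; the only place the outcome differs is the $q$-gap case, where the lone surviving term $\{u_{n+p},u_n\}\,\partial_{u_n}u_{n+p+q}$ now yields $-u_{n+p}^2u_{n+p+q}^2$, matching the sign in \eqref{E:poi_plus_KdV}.

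Finally, for the Jacobi identity I would run through the same three-case dissection ($j-i=q$; $p\mid(j-i)$; $j-i\neq q$ with $p\nmid(j-i)$) as in the KdV proof. The observation that makes the transfer automatic is that $\Omega^+$ coincides with the KdV matrix $\Omega$ except for an overall sign on the $q$-gap entries, and that among any triple $0\le i<j<k\le p+q-1$ at most one of the differences $j-i$, $k-j$, $k-i$ can equal $q$, since $2q>p+q-1$ while a multiple of $p>1$ cannot equal $q$. Consequently every nonzero term occurring in the Jacobi sum carries at most one $q$-gap factor, and within each pair of terms that cancel in the KdV computation the two terms contain the same number of $q$-gap factors, so a uniform sign flip leaves each cancellation intact. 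I expect this last point---verifying in every sub-case that the canceling terms have matching $q$-gap parity, so that no isolated sign-flipped term survives---to be the only real obstacle, and it is settled by the ``at most one $q$-gap per triple'' bound together with the disjointness of the $q$-gap and $\bmod\,p$ index classes.
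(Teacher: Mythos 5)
Your proposal is correct and follows exactly the route the paper intends: the paper gives no separate proof for plus-KdV, stating only that the result is derived ``analogously'' to the KdV case, and your argument is precisely that analogy carried out --- same map-invariance case split, same three-case Jacobi-identity dissection, with the sign bookkeeping tracked through the flipped $q$-gap entries. (One small imprecision: relative to the true KdV values both $\partial u_{n+p+q}/\partial u_n$ and $\partial u_{n+p+q}/\partial u_{n+p}$ change sign, not just the former; but your stated plus-KdV derivatives are right and the extra flip is exactly what keeps the pairwise cancellation \eqref{E:Proof_Poisson_1} intact, as you observe.)
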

\subsection{Examples}
The matrices
\[
L=
\begin{pmatrix}
u_{l,m}-\frac{1}{u_{l+1,m}} & \lambda \\
1 & 0
\end{pmatrix},\qquad
M=
\begin{pmatrix}
u_{l,m} & \lambda \\
1 & \frac{1}{u_{l,m}},
\end{pmatrix}
\]
are Lax-matrices for the KdV-equation (\ref{E:KdV}). For the (3,-2)-reduction (with $n=2l+3m$, $u_{l,m}\to v_n$, $L_{l,m}\to L_n$, $M_{l,m}\to M_n$),
\[
v_{n+2}-v_{n+3}=\frac{1}{v_n}-\frac{1}{v_{n+5}},
\]
the trace of the monodromy matrix
$
\mathcal{L}=M_0^{-1} L_1M_1^{-1} L_2 L_0
$
yields the following three invariants
\begin{align*}
I_1=\frac{1}{v_0}+\frac{1}{v_1}+\frac{1}{v_2}+\frac{1}{v_3}+\frac{1}{v_4}-v_2,\quad
I_2=\frac{(v_2v_4-1)(v_1v_3-1)(v_0v_2-1)}{v_0v_1v_2v_3v_4},\quad
I_3=\frac{(v_1v_4+1)(v_0v_3+1)}{v_0v_1v_2v_3v_4}.
\end{align*}
We have
\[
\Omega=\begin{small}
\begin{pmatrix}
0 & 0 & -v_0^2v_2^2 & v_0^2v_3^2 & v_0^2v_2^2v_4^2 \\
0 & 0 & 0 & -v_1^2v_3^2 & v_1^2v_4^2 \\
v_0^2v_2^2  & 0 & 0 & 0 & -v_2^2v_4^2 \\
-v_0^2v_3^2  & v_1^2v_3^2 & 0 & 0 & 0 \\
-v_0^2v_2^2v_4^2  & -v_1^2v_4^2 & v_2^2v_4^2 & 0 & 0
\end{pmatrix}\end{small},
\]
and $I_1$ is a Casimir of the corresponding bracket. The rank of the bracket is
4, and we have $5-4/2=3$ functionally independent invariants in involution with respect to the Poisson bracket, so the Poisson map is completely integrable.

The matrices
\[
L=
\begin{pmatrix}
u_{l,m}-\frac{1}{u_{l+1,m}} & \lambda \\
1 & 0
\end{pmatrix},\qquad
M=
\begin{pmatrix}
u_{l,m} & \lambda \\
-1 & -\frac{1}{u_{l,m}},
\end{pmatrix}
\]
are anti-Lax-matrices for the plus-KdV-equation (\ref{E:plusKdV}), which satisfy
$L_{l,m+1}M_{l,m}\equiv -M_{l+1,m}L_{l,m}$ modulo the equation. The idea of an anti-Lax pair was suggested to one of the authors by J.A.G. Roberts \cite{JB}. For the (3,-2)-reduction,
\[
v_{n+2}+v_{n+3}=\frac{1}{v_n}+\frac{1}{v_{n+5}},
\]
the trace of the monodromy matrix
gives rise to the following three 2-integrals
\begin{align*}
J_1=\frac{1}{v_0}-\frac{1}{v_1}+\frac{1}{v_2}-\frac{1}{v_3}+\frac{1}{v_4}-v_2,\quad
J_2=\frac{(v_2v_4-1)(v_1v_3-1)(v_0v_2-1)}{v_0v_1v_2v_3v_4},\quad
J_3=\frac{(v_1v_4-1)(v_0v_3-1)}{v_0v_1v_2v_3v_4},
\end{align*}
that is, we have $J_i(v_1,\ldots,v_5)\equiv -J_i(v_0,\ldots,v_4)$.
We have
\[
\Omega=\begin{small}
\begin{pmatrix}
0 & 0 & -v_0^2v_2^2 & -v_0^2v_3^2 & v_0^2v_2^2v_4^2 \\
0 & 0 & 0 & -v_1^2v_3^2 & -v_1^2v_4^2 \\
v_0^2v_2^2  & 0 & 0 & 0 & -v_2^2v_4^2 \\
v_0^2v_3^2  & v_1^2v_3^2 & 0 & 0 & 0 \\
-v_0^2v_2^2v_4^2  & v_1^2v_4^2 & v_2^2v_4^2 & 0 & 0
\end{pmatrix}\end{small},
\]
and $J_1$ is a Casimir of the corresponding bracket. The rank of the bracket is
4, and we have $5-4/2=3$ functionally independent invariants in involution, namely
$I_1=J_1/J_2$, $I_2=J_2/J_3$, and $I_3=J_2J_3$, so the Poisson map is completely integrable.

\section{Poisson brackets for $(q,-p)$ reductions of the Lotka-Volterra equation, and the Liouville equation}
 The discrete Lotka-Volterra equation is given as follows
 \begin{equation}
 \label{E:Lk}
 u_{l,m+1}(1+u_{l,m})=u_{l+1,m}(1+u_{l+1,m+1}).
 \end{equation}
 We introduce the new variable $w_{l,m}=\ln u_{l,m}$ to get
 \begin{equation}
 \label{E:LK_log}
w_{l,m+1}-w_{l+1,m}=-\ln (1+w_{l,m})+\ln (1+w_{l+1,m+1}).
 \end{equation}
This equation's functional form is similar to that of the KdV equation. Therefore, it suggests that one could try to introduce $w_{l,m}=v_{l-1,m-1}- v_{l,m}$ to
write equation~\eqref{E:LK_log} as an Euler-Lagrange equation. One can easily derive a Lagrangian
\begin{equation}
\label{L:lagrangian_LV}
L= v_{l,m}v_{l,m+1}-v_{l,m}v_{l+1,m}+F(v_{l,m}-v_{l+1,m+1}),
\end{equation}
 where
 $$
 F(x)=\int_0^x\ln(1+e^t)dt.
$$
Using the discrete analogue of the Ostragrasky transformation, we can derive Poisson
brackets of mappings obtained  the $(p,-q)$ reduction of equation~\eqref{E:LK_log} in the $v$ variable and then in the $w$ variable.
Once we have a Poisson bracket in the $w$ variable, the associate Poisson bracket in the original variable is calculated by
\[
\{u_{n+i},u_{n+j}\}=\{e^{w_{n+i}}, e^{w_{n+j}}\}=e^{w_{n+i}}e^{w_{n+j}}\{w_{n+i},w_{n+j}\}.
\]
We obtain the following  theorem.
\begin{theorem}
The  map obtained as the $(q,-p)$ reduction of equation~\eqref{E:Lk} admits the following  Poisson structure for $0\leq i <j\leq p+q-1$.
\begin{itemize}
\item If $p=1$,  we have
\begin{equation}
\label{E:poi_Lk1}
\{u_{n+i},u_{n+j}\}=
\left\{
\begin{array}{ll}
(1+u_{n+i})(1+u_{n+j})\prod_{r=i+1}^{j-1}\big(\frac{1+u_{n+r}}{u_{n+r}}\big), & 0<j-i<p+q-1,\\
(1+u_{n+i})(1+u_{n+j})\Big(\prod_{r=i+1}^{j-1}\big(\frac{1+u_{n+r}}{u_{n+r}}\big)-1\Big), & j-i=p+q-1.
\end{array}
\right.
\end{equation}
\item If $p>1$, we have
\end{itemize}
\begin{equation}
\label{E:poi_Lk}
\{u_{n+i},u_{n+j}\}=
\left\{
\begin{array}{ll}
-(u_{n+i}+1)(u_{n+j}+1)& \mbox{if}\  j-i=q,\\
(u_{n+i}+1)(u_{n+j}+1)\prod_{k=1}^{(j-i)/p-1}\Big(\frac{u_{n+i+kp}+1}{u_{n+i+kp}}\Big),& \mbox{if} \ j-i\equiv 0 \mod p,\\
0&\mbox{otherwise}.
\end{array}
\right.
\end{equation}
\end{theorem}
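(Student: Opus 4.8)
The plan is to mirror the proof given above for the KdV map: first obtain the bracket \eqref{E:poi_Lk} from the Ostrogradsky transformation, and then verify directly that it is invariant under the reduction map and satisfies the Jacobi identity. Throughout I write $\Omega_{ij}=\{u_{n+i},u_{n+j}\}$ for the proposed structure.

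First I would produce the formula. Applying Lemma~\ref{L:mapcano} to the reduced form of the Lagrangian~\eqref{L:lagrangian_LV}, namely $L_n=v_nv_{n+q}-v_nv_{n+p}+F(v_n-v_{n+p+q})$ with $F'(x)=\ln(1+e^x)$, yields canonical coordinates $Q_i=v_{n+i-1}$ and momenta $P_i$ whose nonlinear parts are the logarithmic terms $\ln(1+u_{n+i-1})$, in exact analogy with the $P_i$ displayed before the KdV theorem (the reciprocals there being replaced by these logarithms, since $F'(w_{n+i-1})=\ln(1+u_{n+i-1})$). Imposing $\{P_i,Q_j\}=\delta_{ij}$, $\{Q_i,Q_j\}=\{P_i,P_j\}=0$ and solving for the brackets of $w_{n+i}=\ln u_{n+i}$ gives $\{w_{n+i},w_{n+j}\}$, and the transfer rule $\{u_{n+i},u_{n+j}\}=u_{n+i}u_{n+j}\{w_{n+i},w_{n+j}\}$ then produces \eqref{E:poi_Lk}. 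Because $u=e^w$ is a diffeomorphism and hence a Poisson isomorphism between the two brackets, it suffices to prove invariance and the Jacobi identity in either variable; I would carry this out in the $u$-variables to match the presentation above.

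For invariance, the reduction map of \eqref{E:Lk} is $u_{n+p+q}=u_{n+q}(1+u_n)/u_{n+p}-1$, and a direct computation gives
\[
\frac{\partial u_{n+p+q}}{\partial u_n}=\frac{1+u_{n+p+q}}{1+u_n},\qquad
\frac{\partial u_{n+p+q}}{\partial u_{n+p}}=-\frac{1+u_{n+p+q}}{u_{n+p}},\qquad
\frac{\partial u_{n+p+q}}{\partial u_{n+q}}=\frac{1+u_{n+p+q}}{u_{n+q}}.
\]
These are the Lotka--Volterra analogues of the KdV derivatives, and substituting them into the expansion of $\{u_{n+i}',u_{n+p+q-1}'\}$ (the only nontrivial index pairs, the rest being pure shifts) reduces the check to the same trichotomy as in \eqref{E:Proof_Poisson}: whether $p+q-i-1$ equals $q$, is divisible by $p$, or neither. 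One then sees that the three factors above combine with the endpoint factors $(u+1)$ and interior factors $(u+1)/u$ of $\Omega$ to reproduce the shifted bracket, exactly as the squared factors did for KdV.

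The main obstacle is the Jacobi identity~\eqref{E:JacoIden1}. In the KdV case every factor of $\Omega_{ij}$ is a monomial $u_{n+l}^2$, contributing the uniform logarithmic weight $2/u_{n+l}$, which is what made the three cyclic sums telescope cleanly. Here the entries carry two kinds of factors, so that
\[
\frac{\partial}{\partial u_{n+l}}\ln\Omega_{ij}=\frac{1}{u_{n+l}+1}\ \ (l\in\{i,j\}),\qquad
\frac{\partial}{\partial u_{n+l}}\ln\Omega_{ij}=\frac{-1}{u_{n+l}(u_{n+l}+1)}\ \ (l\ \text{interior}),
\]
vanishing otherwise; the shared endpoint of two overlapping brackets and its interior indices are weighted differently. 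The telescoping that closed the cyclic sums for KdV must therefore be redone with these position-dependent weights, regrouping each sum so that the endpoint and interior contributions cancel in pairs. I would organize the verification by the same cases on $(i,j,k)$ ($j-i=q$; $p\mid(j-i)$; otherwise) as above; checking that these weighted telescopes still close is the delicate point, though the index bookkeeping is identical to the KdV argument. Finally, the case $p=1$ is treated by the same method, the extra $-1$ correction in \eqref{E:poi_Lk1} arising from the wrap-around pair $j-i=p+q-1$ where the periodicity closes the product.
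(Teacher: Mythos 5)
Your proposal follows exactly the route the paper itself indicates for this theorem: the paper gives no explicit proof, stating only that the bracket is obtained by applying the Ostrogradsky transformation to the reduced Lagrangian~\eqref{L:lagrangian_LV} and then transferring from the $w$-variables to the $u$-variables via $\{u_{n+i},u_{n+j}\}=u_{n+i}u_{n+j}\{w_{n+i},w_{n+j}\}$, with the verification pattern implicitly referred to the KdV case. Your explicit computations (the reduction map $u_{n+p+q}=u_{n+q}(1+u_n)/u_{n+p}-1$, its partial derivatives, and the endpoint/interior logarithmic weights of $\Omega_{ij}$) are correct, and your identification of the position-dependent weights as the point where the KdV telescoping argument must be adapted is an accurate reading of the only genuinely new work required.
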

Similarly, if we take
$$L= v_{l,m}v_{l,m+1}+v_{l,m}v_{l+1,m}+F(v_{l,m}+v_{l+1,m+1}),$$
and $w_{l,m}=v_{l-1,m-1}+ v_{l,m}$ we obtain Poisson bracket for  the $(q,-p)$ reduction of the following equation
\begin{equation}
\label{E:plus_Lk}
(1+u_{l,m})(1+u_{l+1,m+1})u_{l+1,m}u_{l,m+1}-1=0.
\end{equation}
%************************************************************************************************
\begin{theorem}
The  map obtained as the $(q,-p)$ reduction of equation~\eqref{E:plus_Lk} admits the following  Poisson structure for $0\leq i <j\leq p+q-1$.
\begin{itemize}
\item If $p=1$,  we have
\begin{equation}
\label{E:poi_plus_Lk1}
\{u_{n+i},u_{n+j}\}=
\left\{
\begin{array}{ll}
(-1)^{j-i}(1+u_{n+i})(1+u_{n+j})\prod_{r=i+1}^{j-1}\big(\frac{1+u_{n+r}}{u_{n+r}}\big), & 0<j-i<p+q-1,\\
(1+u_{n+i})(1+u_{n+j})\Big((-1)^{j-i}\prod_{r=i+1}^{j-1}\big(\frac{1+u_{n+r}}{u_{n+r}}\big)-1\Big), & j-i=p+q-1.
\end{array}
\right.
\end{equation}
\item If $p>1$, we have
\end{itemize}
\begin{equation}
\label{E:poi_plus_Lk}
\{u_{n+i},u_{n+j}\}=
\left\{
\begin{array}{ll}
-(u_{n+i}+1)(u_{n+j}+1)& \mbox{if}\  j-i=q,\\
(-1)^{(j-i)/p}(u_{n+i}+1)(u_{n+j}+1)\prod_{k=1}^{(j-i)/p-1}\Big(\frac{u_{n+i+kp}+1}{u_{n+i+kp}}\Big),& \mbox{if} \ j-i\equiv 0 \mod p,\\
0&\mbox{otherwise}.
\end{array}
\right.
\end{equation}
\end{theorem}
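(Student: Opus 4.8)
The proof will follow the direct-verification scheme used for the KdV theorem: the statement already supplies the candidate structure matrix $\Omega$, so it suffices to show that $\Omega$ is skew (immediate), is preserved by the reduction map, and satisfies the Jacobi identity~\eqref{C1E:JacoIden}. The formula itself is produced by applying the Ostrogradsky transformation of Lemma~\ref{L:mapcano} to the Lagrangian $L=v_{l,m}v_{l,m+1}+v_{l,m}v_{l+1,m}+F(v_{l,m}+v_{l+1,m+1})$, setting $w_{l,m}=v_{l-1,m-1}+v_{l,m}$, and transporting the resulting canonical bracket through $u_{n+i}=e^{w_{n+i}}$ by means of $\{u_{n+i},u_{n+j}\}=u_{n+i}u_{n+j}\{w_{n+i},w_{n+j}\}$; being purely Lagrangian, this derivation needs neither a Lax pair nor integrability of~\eqref{E:plus_Lk}. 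Verifying the resulting $\Omega$ directly is, however, shorter than tracking the canonical coordinates, so that is the route I would take.

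First I would make the map explicit. The $(q,-p)$ reduction of~\eqref{E:plus_Lk} is $(1+u_n)(1+u_{n+p+q})u_{n+p}u_{n+q}=1$, whence
\[
u_{n+p+q}=\frac{1}{(1+u_n)u_{n+p}u_{n+q}}-1,\qquad 1+u_{n+p+q}=\frac{1}{(1+u_n)u_{n+p}u_{n+q}}.
\]
Logarithmic differentiation then gives the only nonzero Jacobian entries
\[
\frac{\partial u_{n+p+q}}{\partial u_n}=-\frac{1+u_{n+p+q}}{1+u_n},\qquad
\frac{\partial u_{n+p+q}}{\partial u_{n+p}}=-\frac{1+u_{n+p+q}}{u_{n+p}},\qquad
\frac{\partial u_{n+p+q}}{\partial u_{n+q}}=-\frac{1+u_{n+p+q}}{u_{n+q}}.
\]
These three derivatives are the only analytic input; everything after is the same combinatorial bookkeeping as for KdV.

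Next I would verify preservation. For $0\le i<j<p+q-1$ it is a relabelling by the shift, so the content is the row $j=p+q-1$, where
\[
\{u_{n+i}',u_{n+p+q-1}'\}=\{u_{n+i+1},u_{n+p+q}\}
\]
expands through the three partials above into $\{u_{n+i+1},u_n\}$-, $\{u_{n+i+1},u_{n+p}\}$- and $\{u_{n+i+1},u_{n+q}\}$-contributions. Running the three-case split of the KdV argument (according to whether $i+1\in\{p,q\}$, or $p\mid(q-i-1)$, or neither), one discards the terms that vanish, cancels the $u_n$- and $u_{n+p}$-contributions as in the analogue of~\eqref{E:Proof_Poisson_1}, and checks that the surviving contribution reassembles the shifted entry: the product $\prod\frac{1+u_{n+r}}{u_{n+r}}$ telescopes while the prefactors $-(1+u_{n+p+q})/(1+u_n)$ and $-(1+u_{n+p+q})/u_{n+q}$ supply the endpoint factors $(1+u_{n+i})(1+u_{n+j})$ together with the sign $(-1)^{(p+q-1-i)/p}$. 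One checks that these cancellations close in each case.

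Finally, the Jacobi identity~\eqref{C1E:JacoIden} reduces to $0\le i<j<k\le p+q-1$ and splits into the cases $j-i=q$, $p\mid(j-i)$, and ``otherwise'', exactly as for KdV. The delicate point---and the only genuine departure from the KdV computation---is that each nonzero $\Omega_{ij}$ is multiplicative but now with \emph{position-dependent} logarithmic derivatives: an endpoint factor $(1+u_{n+i})$ contributes $1/(1+u_{n+l})$, whereas an interior factor $\frac{1+u_{n+r}}{u_{n+r}}$ contributes $-1/\big(u_{n+l}(1+u_{n+l})\big)$, so $\tfrac{1}{\Omega_{ij}}\partial_{u_{n+l}}\Omega_{ij}$ depends on whether $l$ is an endpoint or an interior index of the product. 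Making the three telescoping sums cancel under these two different derivatives, while carrying the extra sign $(-1)^{(j-i)/p}$ that distinguishes~\eqref{E:poi_plus_Lk} from the minus-Lotka--Volterra bracket~\eqref{E:poi_Lk}, is the main obstacle; I expect it to succeed because skew-symmetry and shift-covariance force the same three-term collapse that already works for KdV. The $p=1$ statement~\eqref{E:poi_plus_Lk1} is then the degenerate specialization in which every index difference is a multiple of $p$ and $q=p+q-1$, so that at the boundary the ``$p\mid(j-i)$'' product term and the ``$j-i=q$'' term $-(1+u_{n+i})(1+u_{n+j})$ are superposed, which is exactly the $-1$ correction displayed there.
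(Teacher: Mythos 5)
Your proposal is sound in substance but takes a genuinely different route from the paper. The paper does not verify this theorem directly: it obtains the bracket \emph{constructively}, by applying the Ostrogradsky transformation of Lemma~\ref{L:mapcano} to the Lagrangian $L= v_{l,m}v_{l,m+1}+v_{l,m}v_{l+1,m}+F(v_{l,m}+v_{l+1,m+1})$, passing to the $w$-variables via $w_{l,m}=v_{l-1,m-1}+v_{l,m}$, and then transporting the canonical bracket to the $u$-variables through $\{u_{n+i},u_{n+j}\}=u_{n+i}u_{n+j}\{w_{n+i},w_{n+j}\}$. On that route the Jacobi identity and the preservation under the map come for free, since the bracket is the pushforward of the canonical symplectic structure, which Lemma~\ref{L:mapcano} shows is preserved; the price is the bookkeeping of the canonical coordinates $Q_i,P_i$ and of the two changes of variables. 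You instead take the stated $\Omega$ as given and verify it directly, which is the template the paper uses only for its one written-out proof (the KdV theorem). Your analytic inputs are correct: the reduced map $(1+u_n)(1+u_{n+p+q})u_{n+p}u_{n+q}=1$ and the three logarithmic derivatives $\partial u_{n+p+q}/\partial u_n=-(1+u_{n+p+q})/(1+u_n)$, etc., are exactly what is needed, and they reproduce the paper's $(3,-2)$ example matrix. You also correctly observe that no Lax pair or integrability is required --- apt here, since the paper in fact believes this particular reduction is non-integrable.

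The one weakness is that the hardest part of your chosen route --- the case-by-case cancellation in the Jacobi identity with the position-dependent logarithmic derivatives, and the sign $(-1)^{(j-i)/p}$ --- is deferred with ``I expect it to succeed'' rather than carried out. Nothing there will fail (the bracket is a pushforward of a canonical one, so Jacobi must hold), but as written the proposal is a verification plan rather than a verification; to make it a proof you would either have to complete the three-case combinatorics in the style of the KdV argument, or fall back on the constructive derivation you mention in your first paragraph, which sidesteps the issue entirely.
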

%***************************************************************************************
%\begin{remark}
If we take $$L= v_{l,m}v_{l,m+1}+v_{l,m}v_{l+1,m}-F(v_{l,m}+v_{l+1,m+1}),$$
we obtain Poisson brackets of maps obtained as reductions of the discrete Liouville equation
\begin{equation}
\label{E:Liouville}
(1+u_{l,m})(1+u_{l+1,m+1})-u_{l+1,m}u_{l,m+1}=0.
\end{equation}
\begin{theorem}
The  map obtained as the $(q,-p)$ reduction of (\ref{E:Liouville}) admits the following  Poisson structure for $0\leq i <j\leq p+q-1$.
\begin{itemize}
\item If $p=1$,  we have
\begin{equation}
\label{E:poi_Louville1}
\{u_{n+i},u_{n+j}\}=
\left\{
\begin{array}{ll}
-(1+u_{n+i})(1+u_{n+j})\prod_{r=i+1}^{j-1}\big(\frac{1+u_{n+r}}{u_{n+r}}\big), & 0<j-i<p+q-1,\\
-(1+u_{n+i})(1+u_{n+j})\Big(\prod_{r=i+1}^{j-1}\big(\frac{1+u_{n+r}}{u_{n+r}}\big)+1\Big), & j-i=p+q-1.
\end{array}
\right.
\end{equation}
\item If $p>1$, we have
\end{itemize}
\begin{equation}
\label{E:poi_Louville}
\{u_{n+i},u_{n+j}\}=
\left\{
\begin{array}{ll}
-(u_{n+i}+1)(u_{n+j}+1)& \mbox{if}\  j-i=q,\\
-(u_{n+i}+1)(u_{n+j}+1)\prod_{k=1}^{(j-i)/p-1}\Big(\frac{u_{n+i+kp}+1}{u_{n+i+kp}}\Big),& \mbox{if} \ j-i\equiv 0 \mod p,\\
0&\mbox{otherwise}.
\end{array}
\right.
\end{equation}
\end{theorem}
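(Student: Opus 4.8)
The plan is to follow the two-step scheme of the proof of the KdV theorem above: first show that the skew matrix $\Omega$ whose entries are given by \eqref{E:poi_Louville} (and by \eqref{E:poi_Louville1} when $p=1$) is invariant under the reduction map, and then verify that it satisfies the Jacobi identity \eqref{E:JacoIden1}. Solving \eqref{E:Liouville} for the top corner, the $(q,-p)$ reduction is the map
\begin{equation*}
\phi:(u_n,\dots,u_{n+p+q-1})\mapsto(u_{n+1},\dots,u_{n+p+q}),\qquad u_{n+p+q}=\frac{u_{n+p}u_{n+q}-1-u_n}{1+u_n},
\end{equation*}
for which $1+u_{n+p+q}=u_{n+p}u_{n+q}/(1+u_n)$. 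The first thing I would record is the triple of partial derivatives
\begin{equation*}
\frac{\partial u_{n+p+q}}{\partial u_n}=-\frac{1+u_{n+p+q}}{1+u_n},\qquad\frac{\partial u_{n+p+q}}{\partial u_{n+p}}=\frac{1+u_{n+p+q}}{u_{n+p}},\qquad\frac{\partial u_{n+p+q}}{\partial u_{n+q}}=\frac{1+u_{n+p+q}}{u_{n+q}},
\end{equation*}
which play exactly the role that the $u^2$-derivatives played in the KdV calculation.

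For the invariance, the entries $\{u_{n+i}',u_{n+j}'\}$ with $0\le i<j\le p+q-2$ are covariant under the shift $r\mapsto r+1$ and match automatically, so all the work sits in the last row $j=p+q-1$, where $u_{n+p+q-1}'=u_{n+p+q}$. Here I would expand $\{u_{n+i+1},u_{n+p+q}\}$ by bilinearity using the three partials above, exactly as in \eqref{E:Proof_Poisson}, and establish the cancellation \eqref{E:Proof_Poisson_1}: for $i+1\neq p$ and $i+1\neq q$ the contributions of $\{u_{n+i+1},u_n\}$ and $\{u_{n+i+1},u_{n+p}\}$ cancel, because the two telescoping products of factors $(1+u)/u$ differ precisely by the endpoint factor $(1+u_{n+p})/u_{n+p}$. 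It then remains to read off the surviving term in each case: for $i+1\in\{p,q\}$ the term $\{u_{n+i+1},u_n\}$ survives and yields the $j-i=q$ entry (when $i+1=p$) or a length-one $p$-chain (when $i+1=q$); for $p\mid(q-i-1)$ with $i+1\neq q$ the term $\{u_{n+i+1},u_{n+q}\}$ survives and reproduces the $p$-chain product, with one extra factor supplied by the partial derivative; in the remaining cases every term vanishes.

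For the Jacobi identity I would take $0\le i<j<k\le p+q-1$ and run the same three cases as in the KdV proof, namely $j-i=q$, $p\mid(j-i)$, and the remaining case in which $\Omega_{ij}=0$. The genuinely new feature is that the logarithmic derivatives of the bracket entries are no longer uniform: differentiating $\Omega_{ij}$ in an endpoint variable $u_{n+i}$ or $u_{n+j}$ gives the factor $1/(1+u_{n+i})$, whereas differentiating in an interior node $u_{n+i+kp}$ gives $-1/\big(u_{n+i+kp}(1+u_{n+i+kp})\big)$. I would therefore assemble the three sums $\sum_l\Omega_{li}\,\partial_{u_{n+l}}\Omega_{jk}$, $\sum_l\Omega_{lj}\,\partial_{u_{n+l}}\Omega_{ki}$ and $\sum_l\Omega_{lk}\,\partial_{u_{n+l}}\Omega_{ij}$ keeping track of these two kinds of factors, and show that, once regrouped along the $p$-chain joining the three indices, they telescope to zero.

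I expect this last step to be the main obstacle: unlike the KdV case, where every node contributed the same factor $2/u_{n+l}$ and the telescoping was transparent, here one must carefully separate endpoint nodes from interior nodes, and track the two $p$-chains meeting at the middle index. A convenient, essentially computation-free cross-check of the Jacobi identity is structural: the bracket \eqref{E:poi_Louville} is the image of the canonical Ostrogradsky bracket of Lemma~\ref{L:mapcano} for the Lagrangian $L=v_{l,m}v_{l,m+1}+v_{l,m}v_{l+1,m}-F(v_{l,m}+v_{l+1,m+1})$, transported first to $w_{n+i}=v_{n+i-p-q}+v_{n+i}$ and then to $u_{n+i}=e^{w_{n+i}}$ via $\{u_{n+i},u_{n+j}\}=u_{n+i}u_{n+j}\{w_{n+i},w_{n+j}\}$; since each transport is through a linear map on whose image the brackets close, or through a componentwise diffeomorphism, the Jacobi identity is inherited and need not be checked by hand. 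Finally, the $p=1$ statement \eqref{E:poi_Louville1} comes out of the same computation, the case $j-i=p+q-1$ being the degenerate coincidence $q=p+q-1$.
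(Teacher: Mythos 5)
Your proposal is sound, but it follows a different route from the paper's for this particular theorem. The paper gives no explicit proof of the Liouville statement: the formulas are obtained by applying the Ostrogradsky construction of Lemma~\ref{L:mapcano} to the Lagrangian $L=v_{l,m}v_{l,m+1}+v_{l,m}v_{l+1,m}-F(v_{l,m}+v_{l+1,m+1})$, pushing the canonical bracket forward through $w_{n+i}=v_{n+i-p-q}+v_{n+i}$ and then $u=e^{w}$ --- which is exactly the ``computation-free cross-check'' you relegate to the end. Your primary route, by contrast, transplants the paper's explicit KdV proof (invariance of $\Omega$ under the map plus the Jacobi identity) to the Liouville reduction. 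The invariance half of your argument is correct and more explicit than anything in the paper: the expression for $u_{n+p+q}$, the identity $1+u_{n+p+q}=u_{n+p}u_{n+q}/(1+u_n)$, the three partial derivatives, the cancellation of the $\{u_{n+i+1},u_n\}$ and $\{u_{n+i+1},u_{n+p}\}$ contributions for $i+1\notin\{p,q\}$, and the case split for the surviving term all check out, as does your observation that the $p=1$ formula's extra $-1$ at $j-i=p+q-1$ is the coincidence $q=p+q-1$. The soft spot is the Jacobi identity: your direct telescoping argument is only a plan (you yourself flag it as the main obstacle), and your structural fallback buries the essential step in the phrase ``on whose image the brackets close'' --- verifying that $\{w_{n+i},w_{n+j}\}$, computed in the Ostrogradsky bracket on the $2(p+q)$-dimensional $v$-space, is a function of the $w$'s alone \emph{is} the computation that produces \eqref{E:poi_Louville}, and it is not a consequence of linearity. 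Once that closure is established, however, the projection is a Poisson submersion intertwining the $v$-map with the reduced map, so both the Jacobi identity and the invariance follow at once; this is the cleaner of your two routes and the one the paper implicitly takes, whereas your direct verification buys an independent check that does not rely on the Lagrangian origin of the bracket.
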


\begin{remark}
We note that all equations that we have considered in this paper can be written in the following form (after some transformations)
$$u_{l+1,m}-u_{l,m+1}=G(u_{l,m})-G(u_{l+1,m+1}) \ \mbox{or}\
u_{l+1,m}+u_{l,m+1}=G(u_{l,m})+G(u_{l+1,m+1}).
$$
In fact, using the Ostrogradsky transformation and a Lagrangian which has the same form as \eqref{L:lagrangian_LV}, one can always  find Poisson brackets for maps obtained as reductions of these equations. This method can also be applied to more general $(q,p)$-reductions, i.e. we do not require $\rm gcd(p,q)=1$.
\end{remark}

\subsection{Examples}
The matrices
\[
L=
\begin{pmatrix}
1 & -\frac{u_{l+1,m}}{\lambda^2} \\
\frac{1}{u_{l+1,m}+1} & \frac{u_{l+1,m}}{u_{l+1,m}+1}
\end{pmatrix},\qquad
M=
\begin{pmatrix}
1 & -\frac{u_{l,m+1}(u_{l+1,m}+1)}{\lambda^2} \\
1 & 0,
\end{pmatrix}
\]
are Lax-matrices for the Lotka-Volterra-equation (\ref{E:Lk}). For the (3,-2)-reduction,
\[
v_{n+3}(v_{n}+1)=v_{n+2}(v_{n+5}+1),
\]
the determinant of the monodromy matrix
provides the invariant
\[
I_1=\frac{(v_0+1)(v_1+1)(v_2+1)(v_3+1)(v_4+1)}{v_2}
\]
which is also a Casimir of the Poisson bracket defined by
\[
\Omega=\begin{tiny}
\begin{pmatrix}
0 & 0 &(v_0+1)(v_2+1) & -(v_0+1)(v_3+1) & (v_0+1)(\frac{1}{v_2}+1)(v_4+1)\\
0 & 0 & 0 & (v_1+1)(v_3+1) & -(v_1+1)(v_4+1) \\
-(v_0+1)(v_2+1) & 0 & 0 & 0 & (v_2+1)(v_4+1) \\
(v_0+1)(v_3+1) & -(v_1+1)(v_3+1) & 0 & 0 & 0 \\
-(v_0+1)(\frac{1}{v_2}+1)(v_4+1) &  (v_1+1)(v_4+1)& -(v_2+1)(v_4+1) &  0 & 0
\end{pmatrix}\end{tiny}.
\]

The trace yields two additional functionally independent integrals,
\[
I_2=v_0v_3-v_2+v_1v_4,\quad
I_3=(v_1+1)(v_0+v_2+2)+(v_3+1)(v_2+v_4+2)+\frac{(v_0+1)(v_1+v_3+1)(v_4+1)}{v_2}.
\]
These integrals are in involution which shows the mapping is completely integrable.

The (3,-2)-reduction of equation (\ref{E:plus_Lk}),
\[
(1+v_{n})v_{n+2}v_{n+3}(1+v_{n+5})=1,
\]
is a Poisson map with Poisson structure
\[
\Omega=\begin{tiny}
\begin{pmatrix}
0 & 0 &-(v_0+1)(v_2+1) & -(v_0+1)(v_3+1) & (v_0+1)(\frac{1}{v_2}+1)(v_4+1)\\
0 & 0 & 0 & -(v_1+1)(v_3+1) & -(v_1+1)(v_4+1) \\
(v_0+1)(v_2+1) & 0 & 0 & 0 & -(v_2+1)(v_4+1) \\
(v_0+1)(v_3+1) & (v_1+1)(v_3+1) & 0 & 0 & 0 \\
-(v_0+1)(\frac{1}{v_2}+1)(v_4+1) &  (v_1+1)(v_4+1)& (v_2+1)(v_4+1) &  0 & 0
\end{pmatrix}\end{tiny}.
\]
The Casimir function
\[
C=\frac{(v_0+1)v_2(v_2+1)(v_4+1)}{(v_1+1)(v_3+1)}
\]
is a 2-integral, applying the map gives the multiplicative inverse. Hence the
function $C+1/C$ is an integral. We have not been able to find more integrals and, as the map seems to have non-vanishing entropy, we believe it is not integrable.

The (3,-2)-reduction of equation (\ref{E:Liouville}),
\begin{equation}\label{Lior}
(1+v_{n})(1+v_{n+5})=v_{n+2}v_{n+3},
\end{equation}
is a Poisson map with Poisson structure
\[
\Omega=\begin{tiny}
\begin{pmatrix}
0 & 0 &-(v_0+1)(v_2+1) & -(v_0+1)(v_3+1) & -(v_0+1)(\frac{1}{v_2}+1)(v_4+1)\\
0 & 0 & 0 & -(v_1+1)(v_3+1) & -(v_1+1)(v_4+1) \\
(v_0+1)(v_2+1) & 0 & 0 & 0 & -(v_2+1)(v_4+1) \\
(v_0+1)(v_3+1) & (v_1+1)(v_3+1) & 0 & 0 & 0 \\
(v_0+1)(\frac{1}{v_2}+1)(v_4+1) &  (v_1+1)(v_4+1)& (v_2+1)(v_4+1) &  0 & 0
\end{pmatrix}\end{tiny}.
\]
The Casimir function
\[
C=\frac{(v_0+1)(v_2+1)(v_4+1)}{(v_1+1)v_2(v_3+1)}
\]
is a 2-integral with $C^\prime=1/C$ and so provides us with an integral $I_1=C+1/C$.
Other $k$-integrals can be found from the $i$- and $j$-integrals given in \cite{DiscreteLiouville}
for an equation, \cite[Eq. (19)]{DiscreteLiouville}, which is related to (\ref{E:Liouville}) by
$u_{i,j}\to\frac{-1}{1+u_{l,m}}$. The reduction of these integrals yield the 2-integral
\[
A=\frac{(v_0+v_3+1)(v_1+v_4+1)}{(v_1+1)(v_3+1)}
\]
and the 3-integral
\[
B=\frac{(v_0+v_2+1)(v_1+v_3+1)}{(v_0+1)(v_3+1)}.
\]
Applying the map $v_i\mapsto v_{i+1}$ subject to (\ref{Lior}) we obtain
\[
A^\prime=\frac{(v_0+v_3+1)v_2(v_1+v_4+1)}{(v_0+1)(v_2+1)(v_4+1)},\quad A^{\prime\prime}=A
\]
and
\[
B^\prime=\frac{(v_1+v_3+1)(v_2+v_4+1)}{(v_1+1)(v_4+1)},\quad
B^{\prime\prime}=\frac{(v_0+v_2+1)(v_2+v_4+1)}{v_2(v_2+1)},\quad
B^{\prime\prime\prime}=B.
\]
We have $C=A/A^\prime$ and it can be checked that
the integrals $I_1$, $I_2=A+A^\prime$, $I_3=BB^\prime B^{\prime\prime}$ are functionally independent and in involution. The additional integral $I_4=B+B^\prime+B^{\prime\prime}$ is functionally independent, and in involution with $I_2$ (and $I_1$), but not with $I_3$. A linearisation of the lattice equation can be found in \cite{DiscreteLiouville}.

\section*{Acknowledgment}
This research was supported by the Australian Research Council and by La Trobe University's Disciplinary Research Program in Mathematical and Computing Sciences.

\end{document}